\newif\ifdraft \drafttrue   %% 作業用
\newcommand{\shinocom}[1]{\textcolor{magenta}{[(篠原)#1]}}
\newcommand{\yoshicom}[1]{\textcolor{red}{[(吉仲)#1]}}
\newcommand{\usercom}[1]{\textcolor{blue}{[(user)#1]}}
\newcommand{\todo}[1]{{\color{red}{[ToDo: #1]}}}
\newcommand{\shinocom}[1]{}
\newcommand{\yoshicom}[1]{}
\newcommand{\usercom}[1]{}
\newcommand{\todo}[1]{}
\newcommand{\tbf}[1]{\textbf{#1}}
\newcommand{\tsc}[1]{\textsc{#1}}
\newcommand{\mcal}[1]{\mathcal{#1}}
\newcommand{\mrm}[1]{\mathrm{#1}}
\newcommand{\mbb}[1]{\mathbb{#1}}
\newcommand{\mbf}[1]{\mathbf{#1}}
\theoremstyle{definition}
\newtheorem{definition}{Definition}
\theoremstyle{plain}
\newtheorem{theorem}[definition]{Theorem}
\newtheorem{corollary}[definition]{Corollary}
\newtheorem{lemma}[definition]{Lemma}
\definecolor{dgreen}{rgb}{0,.5,0}
\definecolor{dyellow}{rgb}{.7,.7,0}
\newcommand{\qr}[1]{\textcolor{red}{\psg{R}}^{#1}}
\newcommand{\qg}[1]{\textcolor{dgreen}{\psg{G}}^{#1}}
\newcommand{\qb}[1]{\textcolor{blue}{\psg{B}}^{#1}}
\newcommand{\qp}[1]{\textcolor{violet}{\psg{P}}^{#1}}
\newcommand{\qy}[1]{\textcolor{dyellow}{\psg{Y}}^{#1}}
\newcommand{\psg}[1]{\mbf{#1}}
\def\ps@headings{
	\def\@oddhead{		}
	\def\@evenhead{}
	\def\@evenfoot{\hfil \,\,\thepage\hfill}
	\def\@oddfoot{\hfil \,\,\thepage\hfil}
}
\def\ps@titleheadings{
	\def\@oddhead{}
	\def\@evenhead{}
	\def\@evenfoot{\hfil \,\,\thepage\hfill}
	\def\@oddfoot{\hfil \,\,\thepage\hfil}
}
\title{
	% \textbf{On the time complexity of BusOut}
	 \textbf{BusOut is NP-complete}
	% \textbf{BusOut is hard, even to approximate}
	}
\author{
	Takehiro Ishibashi\thanks{Tohoku University} \and
	Ryo Yoshinaka\footnotemark[1] \and
	Ayumi Shinohara\footnotemark[1]
	%「\footnotemark[N]」で第N著者と同じマークが付きます．
}
\date{\empty}
\begin{document}
\maketitle

%%%%%%%%%%%%%%%%%%%% 論文本体 %%%%%%%%%%%%%%%%%%%%
% \newcommand{\chapter}[1]{Introduction}
\begin{abstract}
  This study examines the computational complexity of the decision problem modeled on the smartphone game \textit{Bus Out}.
  The objective of the game is to load all the passengers in a queue onto appropriate buses using a limited number of bus parking spots by selecting and dispatching the buses on a map.
  We show that the problem is NP-complete, even for highly restricted instances.
  We also show that it is hard to approximate the minimum number of parking spots needed to solve a given instance.
\end{abstract}

\section{Introduction}

% \subsection{Background}
% This paper discusses the computational complexity of the smartphone game \textit{Bus Out}.
 The study of computational complexity in puzzles and games is an active research field in theoretical computer science~\cite{Uehara23}.
 Particularly, many puzzles, i.e., single-player games, have been shown to be NP-complete.
 For example, long-loved games such as pencil puzzles like Sudoku~\cite{sudoku}, as well as Minesweeper~\cite{minesweeper} and Tetris~\cite{BreukelaarDHHKL04}, have also been studied, and it has been shown that all of them are NP-complete.
% In this study, we hypothesize that \textit{Bus Out} is NP-complete and attempt to establish its computational complexity classification.
% 
% First, we define the smartphone game \textit{Bus Out} as a decision problem \tbf{BusOut} and show that it is NP-complete. 
% By considering constraints on the game rules, we analyze its computational complexity under various conditions.
% As a result, we demonstrate that the problem is solvable in polynomial time when there is only one color classification, whereas it is NP-complete when multiple color classifications exist.
% 
% Furthermore, we analyze the scenario where all buses are free from traffic congestion and demonstrate that the problem can be solved in polynomial time when the number of boarding locations exceeds the number of distinct bus colors.

% \subsection{What is Bus Out?}
This paper focuses on \textit{Bus Out}, a popular single-player smartphone game developed by iKame Games.
The game has gained significant attention in recent years. 
% \textit{Bus Out} a single-player smartphone puzzle game
% While the rules are simple, the game involves subtle spatial constraints and queueing logic that make it computationally nontrivial.
The objective is to dispatch buses to the station, and load and carry all waiting passengers there.
Each passenger is assigned a color and each bus is assigned a color and a capacity.
Each bus can accommodate only passengers of the same color up to its capacity.
Initially, the station has passengers but no buses.
Buses are caught in traffic.
A bus cannot get out of traffic if there are other buses blocking its intended path.
The player selects a bus that has no other buses in its facing direction, and dispatches it to the station.
The station has a limited number of bus parking spots, each of which accommodates a single bus.
All passengers in the station are arranged in a line and must board in order.
The passenger queue is strict:
if the first passenger cannot find a matching-colored bus, no one is allowed to board, even if suitable buses are available for others further back in the line.
When (and only when) a bus becomes full, it immediately departs the station, freeing up its parking spot.
The game is successfully cleared when all buses and passengers have departed.
If a deadlock occurs, the game cannot be completed.

\begin{figure}[t]
    \centering
    \begin{subfigure}{0.97\columnwidth}
        \centering
        \newcommand{\human}[2]{
    \fill [#1] (#2, 0) circle (0.15);
    \fill [#1] (#2, -0.45) circle [x radius = 0.15, y radius = 0.3];
}
\newcommand{\smallbus}[4]{
    \draw [fill = #1, rotate around={#4:(#2, #3)}] (#2, #3 + 0.7) [rounded corners] -- (#2 - 0.5, #3 + 0.5) [rounded corners] -- (#2 - 0.5, #3 - 0.5) [rounded corners] -- (#2 + 0.5, #3 - 0.5) [rounded corners] -- (#2 + 0.5, #3 + 0.5) -- (#2, #3 + 0.7);
    \foreach \x in {0.2, -0.2}
        \foreach \y in {-0.2, 0.2}
            \draw [white, thick, fill = white, rotate around={#4:(#2, #3)}] (#2 + \x, #3 + \y) circle [radius = 0.15];
}
\newcommand{\mediumbus}[4]{
    \draw [fill = #1, rotate around={#4:(#2, #3)}] (#2, #3 + 1.2) [rounded corners] -- (#2 - 0.5, #3 + 1) [rounded corners] -- (#2 - 0.5, #3 - 1) [rounded corners] -- (#2 + 0.5, #3 - 1) [rounded corners] -- (#2 + 0.5, #3 + 1) -- (#2, #3 + 1.2);
    \foreach \x in {0.2, -0.2}
        \foreach \y in {-0.4, 0, 0.4}
            \draw [white, thick, fill = white, rotate around={#4:(#2, #3)}] (#2 + \x, #3 + \y) circle [radius = 0.15];
}
\newcommand{\largebus}[4]{
    \draw [fill = #1, rotate around={#4:(#2, #3)}] (#2, #3 + 1.7) [rounded corners] -- (#2 - 0.5, #3 + 1.5) [rounded corners] -- (#2 - 0.5, #3 - 1.5) [rounded corners] -- (#2 + 0.5, #3 - 1.5) [rounded corners] -- (#2 + 0.5, #3 + 1.5) -- (#2, #3 + 1.7);
    \foreach \x in {0.2, -0.2}
        \foreach \y in {-0.8, -0.4, 0, 0.4, 0.8}
            \draw [white, thick, fill = white, rotate around={#4:(#2, #3)}] (#2 + \x, #3 + \y) circle [radius = 0.15];
} 

% \hfill
\begin{tikzpicture}[scale=0.75]
    \foreach \x/\y in {0/red, 0.4/red, 0.8/red, 1.2/red, 1.6/violet, 2/violet, 2.4/dyellow, 2.8/dyellow, 3.2/violet, 3.6/violet, 4/blue, 4.4/blue, 4.8/blue, 5.2/dyellow, 5.6/dyellow, 6/dyellow, 6.4/dyellow, 6.8/dyellow, 7.2/dgreen, 7.6/dgreen, 8/dyellow, 8.4/blue, 8.8/blue, 9.2/blue, 9.6/dyellow, 10/dyellow, 10.4/dgreen, 10.8/dgreen, 11.2/red, 11.6/red, 12/red, 12.4/red, 12.8/red, 13.2/red}
        \human{\y}{\x};
    \draw[<-,thick] (-0.2,-1) -- (13.4,-1);
\end{tikzpicture}
\vspace{0.5em}

\begin{tikzpicture}[baseline = -12em,scale=0.75]
    \draw (0, 0) -- (0, -6) -- (4, -6) -- (4, 0) -- (0, 0);
    \node [below] (title) at (2, 0) {\small Parking Spots};
    \foreach \x in {0.7, 2, 3.3, 4.6}
        \draw [dashed, rounded corners] (0.2, -\x) -- (3.8, -\x) -- (3.8, -\x - 1.2) -- (0.2, -\x - 1.2) -- cycle;
\end{tikzpicture}  
\hspace{2em}
\begin{tikzpicture}[baseline = -10em,scale=0.75]
    \smallbus{red}{2}{-2}{90};
    \smallbus{violet}{3.5}{-2}{90};
    \smallbus{dgreen}{0.75}{-2.75}{-135};
    \mediumbus{blue}{2.75}{-3.5}{-75};
    \largebus{dyellow}{5}{-2}{0};
    \mediumbus{red}{2.75}{-0.5}{75};
\end{tikzpicture}
        \caption{Initial configuration}
        \label{fig:example_init}
    \end{subfigure}
    \\[1.5em]
    \begin{subfigure}{0.97\columnwidth}
        \centering
        \newcommand{\human}[2]{
    \fill [#1] (#2, 0) circle (0.15);
    \fill [#1] (#2, -0.45) circle [x radius = 0.15, y radius = 0.3];
}
\newcommand{\smallbus}[4]{
    \draw [fill = #1, rotate around={#4:(#2, #3)}] (#2, #3 + 0.7) [rounded corners] -- (#2 - 0.5, #3 + 0.5) [rounded corners] -- (#2 - 0.5, #3 - 0.5) [rounded corners] -- (#2 + 0.5, #3 - 0.5) [rounded corners] -- (#2 + 0.5, #3 + 0.5) -- (#2, #3 + 0.7);
    \foreach \x in {0.2, -0.2}
        \foreach \y in {-0.2, 0.2}
            \draw [white, thick, fill = white, rotate around={#4:(#2, #3)}] (#2 + \x, #3 + \y) circle [radius = 0.15];
}
\newcommand{\mediumbus}[4]{
    \draw [fill = #1, rotate around={#4:(#2, #3)}] (#2, #3 + 1.2) [rounded corners] -- (#2 - 0.5, #3 + 1) [rounded corners] -- (#2 - 0.5, #3 - 1) [rounded corners] -- (#2 + 0.5, #3 - 1) [rounded corners] -- (#2 + 0.5, #3 + 1) -- (#2, #3 + 1.2);
    \foreach \x in {0.2, -0.2}
        \foreach \y in {-0.4, 0, 0.4}
            \draw [white, thick, fill = white, rotate around={#4:(#2, #3)}] (#2 + \x, #3 + \y) circle [radius = 0.15];
}
\newcommand{\largebus}[4]{
    \draw [fill = #1, rotate around={#4:(#2, #3)}] (#2, #3 + 1.7) [rounded corners] -- (#2 - 0.5, #3 + 1.5) [rounded corners] -- (#2 - 0.5, #3 - 1.5) [rounded corners] -- (#2 + 0.5, #3 - 1.5) [rounded corners] -- (#2 + 0.5, #3 + 1.5) -- (#2, #3 + 1.7);
    \foreach \x in {0.2, -0.2}
        \foreach \y in {-0.8, -0.4, 0, 0.4, 0.8}
            \draw [white, thick, fill = white, rotate around={#4:(#2, #3)}] (#2 + \x, #3 + \y) circle [radius = 0.15];
}
\newcommand{\fillingmediumbus}[4]{
    \draw [fill = #1, rotate around={#4:(#2, #3)}] (#2, #3 + 1.2) [rounded corners] -- (#2 - 0.5, #3 + 1) [rounded corners] -- (#2 - 0.5, #3 - 1) [rounded corners] -- (#2 + 0.5, #3 - 1) [rounded corners] -- (#2 + 0.5, #3 + 1) -- (#2, #3 + 1.2);
    \draw [white, thick, rotate around={#4:(#2, #3)}] (#2 + 0.2, #3 + 0.4) circle [radius = 0.15];
    \draw [white, thick, rotate around={#4:(#2, #3)}] (#2 - 0.2, #3 + 0.4) circle [radius = 0.15];
    \draw [white, thick, rotate around={#4:(#2, #3)}] (#2 + 0.2, #3) circle [radius = 0.15];
    \draw [white, thick, rotate around={#4:(#2, #3)}] (#2 - 0.2, #3) circle [radius = 0.15];
    \draw [white, fill = white, thick, rotate around={#4:(#2, #3)}] (#2 + 0.2, #3 - 0.4) circle [radius = 0.15];
    \draw [white, fill = white, thick, rotate around={#4:(#2, #3)}] (#2 - 0.2, #3 - 0.4) circle [radius = 0.15];
}

\newcommand{\dummy}[4]{
    \ifnum#1=1
        \draw [white, fill = white, rotate around={#4:(#2, #3)}] (#2, #3 + 0.7) [rounded corners] -- (#2 - 0.5, #3 + 0.5) [rounded corners] -- (#2 - 0.5, #3 - 0.5) [rounded corners] -- (#2 + 0.5, #3 - 0.5) [rounded corners] -- (#2 + 0.5, #3 + 0.5) -- (#2, #3 + 0.7);
    \else\ifnum#1=2
        \draw [white, fill = white, rotate around={#4:(#2, #3)}] (#2, #3 + 1.2) [rounded corners] -- (#2 - 0.5, #3 + 1) [rounded corners] -- (#2 - 0.5, #3 - 1) [rounded corners] -- (#2 + 0.5, #3 - 1) [rounded corners] -- (#2 + 0.5, #3 + 1) -- (#2, #3 + 1.2);
    \else\ifnum#1=3
        \draw [white, fill = white, rotate around={#4:(#2, #3)}] (#2, #3 + 1.7) [rounded corners] -- (#2 - 0.5, #3 + 1.5) [rounded corners] -- (#2 - 0.5, #3 - 1.5) [rounded corners] -- (#2 + 0.5, #3 - 1.5) [rounded corners] -- (#2 + 0.5, #3 + 1.5) -- (#2, #3 + 1.7);
    \fi\fi\fi
}

% \hfill
\begin{tikzpicture}[scale=0.75]
    \foreach \x/\y in {0/white, 0.4/white, 0.8/white, 1.2/white, 1.6/violet, 2/violet, 2.4/dyellow, 2.8/dyellow, 3.2/violet, 3.6/violet, 4/blue, 4.4/blue, 4.8/blue, 5.2/dyellow, 5.6/dyellow, 6/dyellow, 6.4/dyellow, 6.8/dyellow, 7.2/dgreen, 7.6/dgreen, 8/dyellow, 8.4/blue, 8.8/blue, 9.2/blue, 9.6/dyellow, 10/dyellow, 10.4/dgreen, 10.8/dgreen, 11.2/red, 11.6/red, 12/red, 12.4/red, 12.8/red, 13.2/red}
        \human{\y}{\x};
    \draw[<-,thick] (1.2,-1) -- (13.4,-1);
\end{tikzpicture}
\vspace{0.5em}

\begin{tikzpicture}[baseline = -12em,scale=0.75]
    \draw (0, 0) -- (0, -6) -- (4, -6) -- (4, 0) -- (0, 0);
    \node [below] (title) at (2, 0) {Parking Spots};
    \foreach \x in {0.7, 2, 3.3, 4.6}
        \draw [dashed, rounded corners] (0.2, -\x) -- (3.8, -\x) -- (3.8, -\x - 1.2) -- (0.2, -\x - 1.2) -- cycle;
    \fillingmediumbus{red}{2}{-1.3}{90}
    \largebus{dyellow}{2}{-2.6}{90}
    \mediumbus{blue}{2}{-3.9}{90}
    \mediumbus{dgreen}{2}{-5.2}{90}
\end{tikzpicture}  
\hspace{2em}
\begin{tikzpicture}[baseline = -10em,scale=0.75]
    \smallbus{red}{2}{-2}{90};
    \smallbus{violet}{3.5}{-2}{90};
    \dummy{1}{0.75}{-2.75}{-135};
    \dummy{2}{2.75}{-3.5}{-75};
    \dummy{3}{5}{-2}{0};
    \dummy{2}{2.75}{-0.5}{75};
\end{tikzpicture}
        \caption{Deadlock}
        \label{fig:example_wrong}
    \end{subfigure}
    \caption{Example scenario of Bus Out. (a) initial configuration; (b) deadlock caused by a misstep.}
    \label{fig:example_scenario}
\end{figure}

Consider the scenario in Figure~\ref{fig:example_init}, where four parking spots are available and bus capacities are four, six, and ten.
At the front of the queue are four red passengers, and a red bus with capacity six is immediately dispatchable.
However, if we naively dispatch this bus first, it will occupy a spot with two vacant seats.
To make room for the purple bus---needed to board the succeeding purple passengers---all other buses, effectively blocking its path, must first be dispatched.
With only three spots remaining, this leads to an unavoidable deadlock (Figure~\ref{fig:example_wrong}).

Instead, let us first dispatch the yellow, blue, and green buses.
They occupy three spots with no passengers.
Then, dispatching the red bus with capacity four allows the four red passengers to board and depart.
This clears the way for the purple bus and passengers.
Finally, the yellow, blue, and green buses fill and depart, followed by the remaining six red passengers using the red bus with capacity six.
This sequence leads to a successful completion of the scenario.

In this paper, by formally defining the game as a computational decision problem, we show that it is NP-complete even for highly restricted cases.
Namely, the problem is NP-hard when the station has a single parking spot, buses have two colors and a unique capacity.
On the other hand, the monochrome version of the problem becomes trivial regardless of the other parameters.
Even with no traffic congestion, the problem remains NP-hard if possible bus capacities are not fixed.

An optimization version of the game can also be considered that asks the number of necessary parking spots to clear the game.
We show that a polynomial-time approximation with any constant ratio is impossible unless $\mrm{P} = \mrm{NP}$. 

\section{Formalization of \textit{Bus Out}}

We work with a fixed set of colors.
The \emph{passenger queue} is formalized as a sequence of colors.
A parking spot is either empty, denoted as $\varepsilon$, or occupied by a bus, represented as a pair $(x,k)$ of a color $x$ and a number $k$ of remaining seats.
The bus station has a limited number of parking spots.
The buses in traffic are represented in a labeled directed graph, called a \emph{congestion graph}, where each vertex represents a bus with a label of its color and capacity, and we have a direct edge between two buses if one blocks the other.
We call a vertex (or bus) in the congestion graph \emph{free} if its out-degree is zero.
A \emph{configuration} is a triple $(G,Q,S)$ of a congestion graph $G$, a passenger queue $Q$, and a spot occupancy state $S$. 
It is called \emph{empty} if $G$ and $Q$ are empty and $S$ consists of empty spots.

The initial configuration illustrated in Figure~\ref{fig:example_init} is formalized as
\begin{align*}
    G &: \quad
    \begin{tikzpicture}[every node/.style={fill,circle,minimum size=30pt, inner sep=0pt, font=\small},baseline = 0pt]
        \node (A) [color = dyellow, text = black] at (0, 0) {$(\psg{Y},10)$};
        \node (B) [color = blue, text = white] at (1.5, 1) {$(\psg{B},6)$};
        \node (C) [color = dgreen, text = white] at (3, 0) {$(\psg{G},4)$};
        \node (D) [color = red, text = white] at (4.5, 1) {$(\psg{R},4)$};
        \node (E) [color = violet, text = white] at (6, 0) {$(\psg{P},4)$};
        \node (F) [color = red, text = white] at (8, 0.5) {$(\psg{R},6)$};
        \draw [<-, very thick] (A) -- (B);
        \draw [<-, very thick] (B) -- (C);
        \draw [<-, very thick] (C) -- (D);
        \draw [<-, very thick] (D) -- (E);
        \draw [<-, very thick] (C) -- (E);
    \end{tikzpicture}
    \\
    Q & = \qr{4}\qp{2}\qy{2}\qp{2}\qb{3}\qy{5}\qg{2}\qy{1}\qb{3}\qy{2}\qg{2}\qr{6}\,,
    % Q &= {\color{red}\boldsymbol{R}^4}{\color{violet}\boldsymbol{P}^2}{\color[HTML]{E3C800}\boldsymbol{Y}^2}{\color{violet}\boldsymbol{P}^2}{\color{blue}\boldsymbol{B}^3}{\color[HTML]{E3C800}\boldsymbol{Y}^5}{\color{green}\boldsymbol{G}^2}{\color[HTML]{E3C800}\boldsymbol{Y}^1}{\color{blue}\boldsymbol{B}^3}{\color[HTML]{E3C800}\boldsymbol{Y}^2}{\color{green}\boldsymbol{G}^2}{\color{red}\boldsymbol{R}^6}\\
    \\
    S &= (\varepsilon, \varepsilon, \varepsilon, \varepsilon)\,.
\end{align*}
A passenger queue is written as $x_1 \dots x_k$ rather than $(x_1,\dots,x_k)$.
If the same color (or color sequence) $x$ repeats $n$ times, it is denoted as $x^n$.
Similarly, the configuration in Figure~\ref{fig:example_wrong} is 
\begin{align*}
    G &:\quad
    \begin{tikzpicture}[every node/.style={fill,circle,minimum size=30pt, inner sep=0pt, font=\small},baseline = 0pt]
        % \fill [color = red] (2, 0) circle[radius = 0.75];
        % \fill [color = violet] (2, 2) circle[radius = 0.75];
        \node (D) [color = red, text = white] at (2, 0) {$(\psg{R},4)$};
        \node (E) [color = violet, text = white] at (4, 0) {$(\psg{P},4)$};
        \draw [<-, very thick] (D) -- (E);
    \end{tikzpicture}
    \\
    Q & = \qp{2}\qy{2}\qp{2}\qb{3}\qy{5}\qg{2}\qy{1}\qb{3}\qy{2}\qg{2}\qr{6}
    \,,\\
    % Q &= {\color{violet}\boldsymbol{P}^2}{\color[HTML]{E3C800}\boldsymbol{Y}^2}{\color{violet}\boldsymbol{P}^2}{\color{blue}\boldsymbol{B}^3}{\color[HTML]{E3C800}\boldsymbol{Y}^5}{\color{green}\boldsymbol{G}^2}{\color[HTML]{E3C800}\boldsymbol{Y}^1}{\color{blue}\boldsymbol{B}^3}{\color[HTML]{E3C800}\boldsymbol{Y}^2}{\color{green}\boldsymbol{G}^2}{\color{red}\boldsymbol{R}^6}\\
    S &= (({\qr{}},2), ({\qy{}}, 10), ({\qb{}}, 6), (\qg{},4))
\,.\end{align*}

A configuration $(G,Q,S)$ can transition to $(G',Q',S')$ if one of the following holds:
\begin{itemize}
    \item $Q'=Q$, $G'$ is $G$ minus a free vertex labeled with $(x,k)$, and $S'$ is obtained from $S$ by replacing an element $\varepsilon$ with $(x,k)$;
    \item $G'=G$, $Q'$ is $Q$ with the first element $x$ removed, and $S'$ is obtained from $S$ by replacing an element $(x,k+1)$ with $(x,k)$ for some $k \ge 1$, or $(x,1)$ with $\varepsilon$.
\end{itemize}
% If $S$ has two or more occurrences of $\varepsilon$ or $(c,k+1)$, just one of them is replaced.
The player has a control on a transition of the former type, while the latter takes place automatically.\footnote{When the station has two or more buses of the same color as the first passenger, the leftmost bus is chosen in the actual smartphone game, whereas the bus with the least available seats is a reasonable choice for the player. However, this difference does not matter in the following discussions of this paper.}
Whenever a transition of the second type is applicable, it takes place before the player selects a free vertex to cause the first type transition.
A configuration is \emph{solvable} if one can make the configuration empty by a sequence of transitions.
If a nonempty configuration allows no transition, it is a \emph{deadlock}.
We say that a  configuration $(G,Q,S)$ \emph{eligible} just in the case where $G$ contains no cycle, and the total capacity of the buses of each color matches the number of passengers of the same color.
It is evident that eligibility is an invariant property under transitions, and that a configuration $(G,Q,S)$ is solvable only if it is eligible.
Thus, throughout this paper, we consider only eligible instances.
The decision problem $\mathbf{BusOut}$ asks whether an input (eligible) configuration is solvable.
In addition, for simplicity, we consider only the empty spot state in initial configurations.
The empty spot state is denoted as $\varepsilon^s$, where $s$ is the number of spots in the station.

In the actual smartphone game, the set of colors and the possible capacities of buses are fixed.
The number of parking spots is fixed as well, unless you pay.
Accordingly, it is natural to consider classes of configurations defined by those parameters.
Let $s \in \mbb{N}_+$, $c \in \mbb{N}_+$, and $V \subseteq \mbb{N}_+$ with $V \neq \emptyset$, where $\mbb{N}_+$ denotes the set of positive integers.
By $\mcal{B}(s,c,V)$, we denote the class of configurations where the number of bus parking spots is $s$, the number of distinct colors for buses and passengers is at most $c$, and bus capacities are restricted to elements of $V$.
For example, all game configurations appearing in \textit{Bus Out} by iKame Games belong to $\mcal{B}(4, 8, \{4, 6, 10\})$, unless you pay.
Note that $\mcal{B}(s,c,V) \subseteq \mcal{B}(s,c',V')$ if $c \le c'$ and $V \subseteq V'$, whereas $\mcal{B}(s,c,V) \cap \mcal{B}(s',c,V) = \emptyset$ if $s \neq s'$.
Using $c$ colors is a special case of using $c' > c$ colors, but having $s$ spots is not a special case of having $s' > s$ spots.
For a class $\mcal{B}$ of configurations, we define $\mathbf{BusOut}(\mcal{B})$ to be the problem whose instances are from $\mcal{B}$.

\section{The computational complexity of $\mathbf{BusOut}$}
It is obvious that $\mbf{BusOut}$ belongs to NP, since the congestion graph and the passenger queue are monotonically shrunk by transitions.
In this section, we discuss conditions for the problem to be NP-complete and to belong to P.

We first show that the monochrome instances are trivially solvable.
\begin{theorem}\label{thm:monochrome}
    Every instance of\/ $\mbf{BusOut}(\mcal{B}(s,1,V))$ is solvable for any $s$ and $V$.
\end{theorem}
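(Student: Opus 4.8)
The plan is to give a single strategy that works for every eligible monochrome configuration and that uses only one parking spot, so that the claim for arbitrary $s$ reduces to the case $s=1$. Since the class $\mcal{B}(s,1,V)$ allows only one color, every passenger and every bus shares that color; hence a boarding transition (the automatic, second-type transition) is applicable precisely when the station contains some bus with at least one free seat and the queue is nonempty, and whenever it is applicable the player may not move. The strategy I would analyze is the ``one bus at a time'' discipline: whenever the single spot is empty and the congestion graph $G$ is nonempty, dispatch an arbitrary free bus; otherwise do nothing and let the forced boarding transitions run. Because $G$ is acyclic (this is part of eligibility), it has a free vertex whenever it is nonempty, so the strategy never gets stuck for lack of a dispatchable bus.

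The core step is to show that each bus dispatched under this discipline is filled to capacity and then departs, freeing the spot again. Let $k_1,\dots,k_m$ be the capacities of the buses in the order the strategy dispatches them, and let $n$ be the number of passengers; eligibility gives $n=\sum_{j=1}^{m} k_j$, and each $k_j\ge 1$ since $V\subseteq\mbb{N}_+$. I would argue by induction on $i$ that, just before bus $i$ is dispatched, exactly $k_1+\dots+k_{i-1}$ passengers have boarded and no bus is in the station; then the queue still holds $n-(k_1+\dots+k_{i-1})=k_i+k_{i+1}+\dots+k_m\ge k_i$ passengers, which is enough for the forced boarding transitions to fill bus $i$ completely, at which point it leaves (via the rule replacing $(x,1)$ by $\varepsilon$), restoring the invariant for $i+1$. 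After bus $m$ has departed, $G$ is empty and exactly $n$ passengers have boarded, so $Q$ is empty and the spot state is again $\varepsilon^s$; the configuration is empty, hence the original one is solvable. Note also that with at most one bus ever in the station, every applicable boarding transition is uniquely determined, so there is no ambiguity in the automatic transitions.

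The only real obstacle is ruling out the bad event that a dispatched bus becomes stranded, sitting in the station with unfilled seats while the queue can no longer feed it (in the monochrome setting, while the queue is empty). This is exactly what the inequality $n-(k_1+\dots+k_{i-1})=k_i+\dots+k_m\ge k_i$ forbids, and it is the one place where eligibility --- the color-wise matching of total bus capacity to passenger count --- is essential. No case analysis on $V$ or on the structure of $G$ beyond acyclicity is needed, and the finite bound $m$ on the number of dispatches guarantees termination. (The degenerate case where $Q$ is initially empty is subsumed: eligibility then forces $G$ empty and the configuration is already empty.)
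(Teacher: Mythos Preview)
Your proof is correct. The paper's two-line argument takes a slightly different angle: rather than fixing and analyzing one specific strategy, it observes that at \emph{any} reachable (eligible, monochrome) configuration some transition applies --- if a bus sits in the station, the forced boarding step fires (the queue cannot be empty while a bus has unfilled seats, by the capacity--passenger match in eligibility); if no bus is in the station, either a free vertex of $G$ can be dispatched or, when $G$ is empty, eligibility forces $Q$ empty as well. Since transitions monotonically shrink the configuration, no deadlock is ever reached and the empty configuration is the only terminal one. Thus the paper actually proves the stronger statement that \emph{every} play succeeds, whereas you exhibit one witnessing play and verify it via a capacity-counting invariant; your version makes the role of eligibility more explicit, while the paper's is shorter and handles all $s$ uniformly without singling out a single spot.
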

\begin{proof}
    If there is a bus in the station, the second type of a transition immediately takes place.
    Otherwise, a transition of the first type is possible, unless it is already the empty configuration.
\end{proof}

While the monochrome version of \tbf{BusOut} is trivial, the following theorem shows that \tbf{BusOut} with two colors is hard even when the other parameters are very much restricted.
\begin{theorem}\label{theorem:121_comp}
    The problem $\mbf{BusOut}(\mcal{B}(1,2,\{1\}))$ is NP-complete.
\end{theorem}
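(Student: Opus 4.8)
The plan is to reduce a known NP-hard problem to $\mathbf{BusOut}(\mcal{B}(1,2,\{1\}))$; membership in NP has already been observed. The first step is to notice that this class is far more rigid than it looks. With a single parking spot and every capacity equal to $1$, once a bus enters the spot either the current head of the queue has its colour --- then that passenger boards, the bus fills, departs, and the spot is free again --- or it does not, and then neither transition type applies, i.e.\ we are deadlocked. So the player is essentially forced: at each step the only decision is which free (out-degree-zero) vertex of the congestion graph to dispatch, and it must carry the colour of the current head of the queue. Since eligibility makes $|Q|$ equal the number of vertices, with matching colour counts, $\mathbf{BusOut}(\mcal{B}(1,2,\{1\}))$ is exactly the problem: given a $2$-coloured acyclic congestion graph $G$ and a binary word $Q$ with the right colour counts, can $G$ be dismantled by repeatedly deleting a current sink so that the sequence of deleted colours equals $Q$?

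In the special case where $G$ is a disjoint union of directed paths, a dismantling is precisely an interleaving of the words read off the paths from their sinks towards their sources, so this sub-case is literally multi-string shuffle recognition over a binary alphabet --- a problem that is solvable in polynomial time for any fixed number of strings but is NP-complete when the number of strings is part of the input. Reducing from it is immediate: turn the $j$-th input string $c_1c_2\cdots c_\ell$ into a path whose sink has colour $c_1$, its predecessor colour $c_2$, and so on, set $Q$ to the target word, one spot, unit capacities; the configuration then lies in $\mcal{B}(1,2,\{1\})$, is eligible iff the colour counts agree, and is solvable iff $Q$ is a shuffle of the given strings. If one wants an entirely self-contained proof, the same ``dismantle-to-match-$Q$'' formulation can instead be fed a direct reduction from $3$-SAT, now using the extra power of non-path congestion graphs to make up for having only two colours.

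For the $3$-SAT route I would split $Q$ into three phases --- an assignment phase, a clause-checking phase, and a cleanup phase --- with the runs of the two colours acting as a global clock that pins down which vertex must be removed at each slot. Each variable $x_i$ gets a small sub-DAG offering, at its slot, a binary timing choice: delete the ``true'' vertex now and its ``false'' twin only in cleanup, or vice versa; this determines whether the ``true-consequence'' chain or the ``false-consequence'' chain hanging off $x_i$ becomes dismantlable during the clause phase. Each clause $\ell_1\vee\ell_2\vee\ell_3$ gets a gadget that can supply its demanded colours on time only if at least one of the three corresponding consequence chains has already been processed --- an OR-test --- while buffer vertices blocked until cleanup, together with a deliberate choice of run lengths, keep distinct gadgets from interfering (so that a vertex meant for one gadget can never be used to discharge another). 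One then verifies that a satisfying assignment yields a legal dismantling and, conversely, that the choices made at the variable slots of any legal dismantling satisfy every clause.

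I expect the clause (OR) gadget, and the accompanying ``no cheating'' argument, to be the main obstacle. Removability of a sink is an inherently conjunctive event --- all its out-neighbours must already be gone --- so building a disjunctive clause test with only two colours, one spot, and unit capacities takes care, the point being that the required colour word is completely rigid, so the only way to meet a clause's demand during its window is to route through a true literal. Equally delicate is showing that no unintended schedule slips through: one has to prove that every feasible dismantling is driven into the canonical phase structure, and that is where most of the correctness proof will go.
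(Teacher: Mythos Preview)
Your reformulation of $\mathbf{BusOut}(\mcal{B}(1,2,\{1\}))$ as ``dismantle a $2$-coloured DAG by repeated sink deletion so that the colour trace equals $Q$'' is correct, and so is the observation that disjoint-path instances are exactly binary multi-string shuffle. The shuffle route therefore gives a valid proof, provided the NP-completeness of binary multi-shuffle is citable; that result is true, but its standard proof is via \textbf{3-Partition}, using strings $R^{a_i}G^{a_i}$ and target $(R^{T}G^{T})^{n}$ --- which is precisely the construction the paper carries out. So your first route and the paper's are the same argument at different levels of abstraction: you package the 3-Partition step as a citation, the paper spells it out. The explicit construction has the advantage of being self-contained, and the concrete instance is then reused verbatim to prove Theorem~\ref{theorem:s2V_comp} (arbitrary $s$ and $V$) and Corollary~\ref{cor:approx} (inapproximability of the spot count), which a black-box appeal to the shuffle literature would not directly support.

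Your second route, the direct 3-SAT reduction, is not a proof but a plan. You correctly identify the hard part --- sink removal is inherently conjunctive, so engineering a disjunctive clause gadget with only two colours and unit capacities is delicate --- and you explicitly leave both the gadget construction and the ``no unintended schedule'' argument open. If the shuffle route were unavailable, what remains here would not establish the theorem.
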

\begin{proof}
The NP-hard problem that we use for our reduction is the \tbf{3-Partition} problem~\cite{GareyJohnson75}.
Given a multiset $M = \{a_1, a_2, \dots, a_{3n}\}$ of $3n$ positive integers as input, the problem asks whether it is possible to partition the elements of $M$ into $n$ subsets, each containing exactly three elements, such that the sum of the elements in each subset is equal to $T = \frac{1}{n} \sum_{i = 1}^{3n} a_i$.
It is known that the problem remains NP-hard even when each element $a_i$ of $M$ satisfies the constraint $T/4 < a_i < T/2$.
In the following discussions, we assume this constraint holds.
Note that \tbf{3-Partition} is strongly NP-hard: the input size is evaluated as $\sum_{i=1}^{3n} a_i$.

    % We show the NP-hardness by a reduction from \tbf{3-Partition}.
    \begin{figure}[t]
        \begin{center}
            \newcommand{\human}[2]{
    \fill [#1] (#2, 0) circle (0.15);
    \fill [#1] (#2, -0.45) circle [x radius = 0.15, y radius = 0.3];
}

\newcommand{\multihumen}[3]{
    % \draw [help lines] (0, 0) grid (2, -2);
    \pgfmathsetmacro{\first}{#2 + 0.4}
    \pgfmathsetmacro{\second}{#2 + 1}
    \pgfmathsetmacro{\third}{#2 + 1.6}
    \draw decorate[decoration = {brace, amplitude = 5pt}] {
        (#2 - 0.15, 0.3) -- (#2 + 1.75, 0.3)
    };
    \node (num) at (#2 + 0.8, 0.5) [above] {#3};
    \human{#1}{#2} 
    \human{#1}{\first}
    \node (dots) at (\second, -0.2) {\color{#1}$\cdots$};
    \human{#1}{\third}
}

\newcommand{\humenblock}[2]{
    % \draw [help lines] (-3, -3) grid (3, 3);
    \draw [dashed, rounded corners] (#1 - 2.15, 1.1) -- (#1 + 2.15, 1.1) -- (#1 + 2.15, -0.9) -- (#1 - 2.15, -0.9) -- cycle;
    \multihumen{red}{#1 - 1.9}{#2};
    \multihumen{dgreen}{#1 + 0.3}{#2};
}
\newcommand{\humenqueue}[2]{
    \draw decorate[decoration = {brace, amplitude = 10pt}] {
        (-2.15, 1.2) -- (12.15, 1.2)
    };
    \node (num) at (5, 1.6) [above] {#2};
    \humenblock{0}{#1};
    \humenblock{4.5}{#1};
    \node (dots) at (7.25, -0.2) {$\cdots$};
    \humenblock{10}{#1};
    \draw[<-,thick] (-2.3,-1.1) -- (12.2,-1.1);
}

\newcommand{\smallbus}[4]{
    \draw [fill = #1, rotate around={#4:(#2, #3)}] (#2, #3 + 0.7) [rounded corners] -- (#2 - 0.5, #3 + 0.5) [rounded corners] -- (#2 - 0.5, #3 - 0.5) [rounded corners] -- (#2 + 0.5, #3 - 0.5) [rounded corners] -- (#2 + 0.5, #3 + 0.5) -- (#2, #3 + 0.7);
    \draw [white, thick, fill = white, rotate around={#4:(#2, #3)}] (#2, #3) circle [radius = 0.15];
}

\newcommand{\busblock}[4]{
    % \draw [help lines] (-3, -3) grid (3, 3);
    \smallbus{#1}{#3}{#2}{0};
    \smallbus{#1}{#3}{#2 - 1.5}{0};
    \node (dots) at (#3, #2 - 1.5 - #4) {\color{#1}$\vdots$};
    \smallbus{#1}{#3}{#2 - #4 * 2 - 1.7}{0};
}
\newcommand{\busqueue}[3]{
    \draw decorate[decoration = {brace, amplitude = 10pt, mirror}] {
        (#1 -0.7, 0.7) -- (#1 -0.7, -#2 - #2 - 2.2)
    };
    \draw decorate[decoration = {brace, amplitude = 10pt, mirror}] {
        (#1 -0.7, -#2 * 2 - 2.5) -- (#1 -0.7, -#2 * 4 - 5.4)
    };
    \node (num) at (#1 -1, -#2 - 0.75) [left] {#3};
    \node (num) at (#1 -1, -#2 * 3 - 3.95) [left] {#3};
    \busblock{red}{0}{#1}{#2};
    \busblock{dgreen}{- #2 - #2 - 3.2}{#1}{#2};
}

% \hfill
\begin{tikzpicture}[scale=0.75]
    \humenqueue{$T$}{$2nT$}
\end{tikzpicture}
\vspace{0.5em}

\begin{tikzpicture}[baseline = -18em,scale=0.75]
    \draw (0, 0) -- (0, -2.1) -- (4, -2.1) -- (4, 0) -- (0, 0);
    \node [below] (title) at (2, 0) {\small Parking Spot};
    \foreach \x in {0.7}
        \draw [dashed, rounded corners] (0.2, -\x) -- (3.8, -\x) -- (3.8, -\x - 1.2) -- (0.2, -\x - 1.2) -- cycle;
\end{tikzpicture}  
\hspace{1em}
\begin{tikzpicture}[baseline = -15em,scale=0.75]
    % \busblock{red}{0}{-2}{1};
    % \busblock{dgreen}{0}{-3.5}{2};
    \draw decorate[decoration = {brace, amplitude = 10pt}] {
        (-0.5, 1.2) -- (8.5, 1.2)
    };
    \node (num) at (4, 1.9) {$3n$};
    \busqueue{0}{1.5}{$a_1$};
    \busqueue{2.75}{1.1}{$a_2$};
    % \node (dots1) at (3.75, -5) {$\cdots$};
    % \busqueue{5.5}{1.75}{$a_i$};
    % \node (dots2) at (6.75, -5) {$\cdots$};
    \node (dots) at (5, -4.5) {\Large$\cdots$};
    \busqueue{8}{1.25}{$a_{3n}$};
\end{tikzpicture}
            \caption{Reduction from \tbf{3-Partition} to $\mbf{BusOut}(\mcal{B}(1, 2, \{1\}))$}
            \label{fig:121_game_init}
        \end{center}
    \end{figure}
    Figure~\ref{fig:121_game_init} illustrates our reduction from \tbf{3-Partition}.
    The congestion graph $G_M$ consists of $3n$ disjoint directed paths, where the $i$-th path consists of $a_i$ red buses followed by $a_i$ green buses.
    The passenger queue $Q_M$ consists of $n$ segments each has $T$ red passengers followed by $T$ green passengers.
    The initial parking spot is empty.
    Obviously this is a polynomial-time reduction.
    
    We first show that if $M$ has a solution $\{ \{a_{p_1}, a_{q_1}, a_{r_1}\}, \{a_{p_2}, a_{q_2}, a_{r_2}\}, \dots, \{a_{p_n}, a_{q_n}, a_{r_n}\}\}$, then the configuration $(G_M,Q_M,(\varepsilon))$ is solvable.
    The following two steps remove the passengers in the $i$-th segment for $i = 1, 2, \dots, n$:
    \begin{enumerate}
        \item Dispatch every red bus from the $p_i, q_i, r_i$-th directed paths in $G_M$, which $a_{p_i} + a_{q_i} + a_{r_i} = T$ red passengers board.
        \item Dispatch every green bus from the $p_i, q_i, r_i$-th paths, which $a_{p_i} + a_{q_i} + a_{r_i} = T$ green passengers board.
    \end{enumerate}
    % Since $a_{p_i} + a_{q_i} + a_{r_i} = T$, the total number of green and red buses in the bus paths $p_i, q_i, r_i$ is $T$ each.
    % Hence, by performing the above operations, we can eliminate all the buses and passengers, ensuring that the solution works.

    Next, suppose $(G_M,Q_M,(\varepsilon))$ is solvable.
    Consider the moment when the last passenger in the first segment is boarding.
    Let $i_1, i_2, \dots, i_k$ be the bus path indices from which at least one green bus has been dispatched by this moment.
    To load the $T$ green passengers in the first segment, we require 
    \[
    a_{i_1} + a_{i_2} + \dots + a_{i_k} \geq T
    \,.\]
    To dispatch a green bus in the $i$-th bus path, all the $a_i$ red buses in front must have already been dispatched, and they should have departed the station with red passengers.
    Since there are no more than $T$ red passengers, at most $T$ red buses can depart.
    Therefore,
    \[
        a_{i_1} + a_{i_2} + \dots + a_{i_k} \leq T
    \,.\]    
    Thus, $a_{i_1} + a_{i_2} + \dots + a_{i_k} = T$ and $k=3$ due to $T/4 < a_i < T/2$ for all $i$.
    This equation implies that when every passenger in the first segment has left, all and only buses in the three paths $i_1,i_2,i_3$ have gone.
    The obtained configuration is the one reduced from $M$ removing $a_{i_1},a_{i_2},a_{i_3}$.
    Hence, by repeating this argument, we obtain a solution for $M$.
%     
%     Since for all bus queues, a green bus comes before a red bus, before these $T$ red buses depart, the $a_{i_1} + a_{i_2} + \dots + a_{i_k}$ green buses must have already departed. If $a_{i_1} + a_{i_2} + \dots + a_{i_k} > T$, then we cannot dispatch all the green buses from these $k$ queues. Thus, it follows that $a_{i_1} + a_{i_2} + \dots + a_{i_k} = T$. Moreover, because of the constraint $T/4 < a_i < T/2$, we deduce that $k = 3$.
% 
%     From this, it is clear that the only way to send green $T$ people and red $T$ people in order is to dispatch the buses from three queues, where the total number of green and red buses is $T$ each. After performing this operation, the sequence of people will remain in the form of green $T$ people, red $T$ people, and so on, requiring us to repeat the same operation $n-1$ more times.
% 
%     Let the $n$ sets of three queues chosen in this solution be $(\phi_1, \psi_1, \rho_1), \\(\phi_2, \psi_2, \rho_2), \dots, (\phi_n, \psi_n, \rho_n)$. For any $1 \leq j \leq n$, since $a_{\phi_j} + a_{\psi_j} + a_{\rho_j} = T$, we can find a solution for $\Pi$ of the form $(a_{\phi_1}, a_{\psi_1}, a_{\rho_1}), (a_{\phi_2}, a_{\psi_2}, a_{\rho_2}), \dots, (a_{\phi_n}, a_{\psi_n},\\ a_{\rho_n})$.
% 
%     Therefore, if $\Psi(\Pi)$ has a solution, $\Pi$ must also have a solution, and thus ``$\Pi$ is a Yes-instance $\Longleftarrow \Psi(\Pi)$ is a Yes-instance'' holds.
\end{proof}
Theorem~\ref{theorem:s2V_comp} below generalizes Theorem~\ref{theorem:121_comp} to an arbitrary parking spot number $s$ and an arbitrary capacity set $V$.
\begin{lemma}\label{lem:duplication}
    For any $d \ge 1$ and any $(G,Q,S) \in \mcal{B}(s,c,\{1\})$, one can construct an instance $(G_d,Q_d,S_d) \in \mcal{B}(s,c,\{d\})$ in polynomial-time such that $(G,Q,S)$ is solvable if and only if so is $(G_d,Q_d,S_d)$.
\end{lemma}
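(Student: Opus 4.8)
The plan is to use the obvious ``$d$-fold blow-up.'' Let $G_d$ be $G$ with every vertex label $(x,1)$ replaced by $(x,d)$, let $Q_d$ be obtained from $Q$ by replacing each color $x$ with the block $x^d$, and let $S_d$ be obtained from $S$ by replacing each occupied spot $(x,1)$ with $(x,d)$; in particular $S_d=\varepsilon^s$ when $S=\varepsilon^s$. The congestion graph is unchanged (hence still acyclic), and for every color the total bus capacity is multiplied by $d$ exactly as the passenger count is, so $(G_d,Q_d,S_d)$ is again eligible and belongs to $\mcal{B}(s,c,\{d\})$; it is constructed in time polynomial in $d$ and the size of $(G,Q,S)$.

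For correctness I would first reduce solvability to a combinatorial notion. Up to the forced transitions, a play of $\mbf{BusOut}$ is governed only by the order in which the player dispatches buses; moreover, by the observation in the footnote we may fix a discipline for the automatic transitions — say, the first passenger of the queue always boards the parked bus of its color that currently has the fewest free seats — since ``splitting'' a monochromatic run of passengers among several parked buses never helps: keeping a bus parked only postpones the release of its spot, and a short exchange/domination argument makes this precise (the greedy discipline dispatches at least as many buses and frees at least as many spots, at every stage, as any play with the same dispatch order). For an ordering $\sigma$ of the buses in which every bus occurs after all the buses it points to in $G$, call the resulting deterministic play the \emph{$\sigma$-play}; then a configuration is solvable if and only if its $\sigma$-play reaches the empty configuration for some such $\sigma$.

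The heart of the proof is a lock-step invariant. Fix $\sigma$ and run the $\sigma$-play of $(G,Q,S)$ and of $(G_d,Q_d,S_d)$ in parallel. I claim that at every player turn the two reached configurations are ``$d$-related'': the congestion graphs coincide after multiplying all capacities by $d$, the $d$-side queue is the blockwise $d$-fold blow-up of the $1$-side queue, the occupied spots correspond (capacities $\times d$), and — the crucial point — every bus currently parked on the $d$-side still has its full capacity $d$. This is proved by induction on player turns: after both plays dispatch the same fresh bus, the forced transitions on the $d$-side load the whole first queue-block into the same bus that, on the $1$-side, receives the single first passenger — that bus is parked, fresh, of the matching color, and the block consists of exactly $d$ passengers of that color, so the fewest-free-seats rule keeps routing the block to it — whereupon that bus departs on both sides and the queue advances (by one block on the $d$-side, by one passenger on the $1$-side); this repeats until the next queue color is matched by no parked bus, which happens simultaneously on the two sides because the parked colors agree. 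Hence the two runs hit a deadlock, or reach the empty configuration, at exactly the same player turns, so $\sigma$ wins one instance if and only if it wins the other, and the desired equivalence follows.

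The step I expect to be the real obstacle is the ``no splitting'' claim in the second paragraph — normalizing an arbitrary winning play of the $d$-side instance into one that feeds each monochromatic block to a single bus. (The implication ``$(G,Q,S)$ solvable $\Rightarrow (G_d,Q_d,S_d)$ solvable'' needs none of this, since a $1$-side solution is simply mimicked block by block, so the difficulty is confined to the converse.) Once that normalization and the ``parked buses stay fresh'' invariant are established, the remainder is bookkeeping.
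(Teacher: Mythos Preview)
Your construction is exactly the paper's: same relabeled graph, $d$-fold duplicated queue, and seat counts scaled by $d$. The paper's own proof is a one-line assertion (``It is easy to confirm that this does not affect the (in)solvability''), so your lock-step argument is a correct and much more careful expansion of what the paper leaves implicit.

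The obstacle you flag---normalizing an arbitrary winning $d$-side play to one that never splits a block---is real but easily dispatched, and you essentially already have the ingredients. The key observation is that after any fixed dispatch prefix $\sigma_1,\dots,\sigma_k$, the queue progress is \emph{discipline-independent}: the remaining seats of each color in the station equal (dispatched capacity of that color) minus (passengers of that color boarded), a quantity that does not depend on how passengers were routed among same-colored buses. Hence every discipline reaches the same queue position at the same player turn. Meanwhile the number of departed $x$-buses is at most $\lfloor(\text{boarded }x)/d\rfloor$, with equality under the fewest-free-seats rule; so greedy uses no more spots than any other discipline at every player turn, and therefore if any discipline with order $\sigma$ avoids overflow, greedy with the same $\sigma$ does too. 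This gives the normalization cleanly, after which your invariant (all parked $d$-side buses are fresh at player turns) and the block-by-block correspondence go through exactly as you describe.
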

\begin{proof}
    The congestion graph $G_d$ is identical to $G$ except the capacity of the buses.
    Every passenger of $Q$ is duplicated $d$ times in $Q_d$.
    The remained capacity of each element of $S$ is multiplied by $d$ in $S_d$.
    It is easy to confirm that this does not affect the (in)solvability.
\end{proof}
\begin{theorem}\label{theorem:s2V_comp}
    The problem $\mbf{BusOut}(\mcal{B}(s,c,V))$ is NP-complete for any integers $s \ge 1$, $c \ge 2$, and any nonempty capacity set $V \subseteq \mbb{N}_+$.
\end{theorem}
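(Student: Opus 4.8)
The plan is to reduce \tbf{3-Partition} to $\mbf{BusOut}(\mcal{B}(s,c,V))$, reusing the reduction behind Theorem~\ref{theorem:121_comp} and peeling off the parameters one at a time. Membership in NP is already granted for $\mbf{BusOut}$ in general. For hardness, first dispose of the capacity set: fix an arbitrary $d\in V$; by Lemma~\ref{lem:duplication}, any instance of $\mbf{BusOut}(\mcal{B}(s,2,\{1\}))$ reduces in polynomial time to an equivalent instance of $\mbf{BusOut}(\mcal{B}(s,2,\{d\}))$, and $\mcal{B}(s,2,\{d\})\subseteq\mcal{B}(s,c,V)$ since $2\le c$ and $\{d\}\subseteq V$. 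So it suffices to prove, for each fixed $s\ge 1$, that $\mbf{BusOut}(\mcal{B}(s,2,\{1\}))$ is NP-hard.

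For that I would take the construction of Figure~\ref{fig:121_game_init} essentially verbatim but run it with $s$ parking spots, after one cheap preprocessing step on the \tbf{3-Partition} input: given $M=\{a_1,\dots,a_{3n}\}$ with $T/4<a_i<T/2$, replace each $a_i$ by $a_i'' = s\,(a_i+1)$, obtaining a multiset $M''$ with average $T'' = s\,(T+3)$ that still satisfies $T''/4<a_i''<T''/2$ and is a yes-instance exactly when $M$ is. Then let $(G_{M''},Q_{M''},\varepsilon^s)$ be built from $M''$ as in Theorem~\ref{theorem:121_comp}: one directed path of $a_i''$ red buses followed by $a_i''$ green buses per element, and $Q_{M''}=(\qr{T''}\qg{T''})^n$. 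The ``only if'' direction carries over unchanged --- a \tbf{3-Partition} of $M''$ yields a play that never occupies more than one spot at a time, which is legal with $s$ spots.

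The real content is showing that the extra $s-1$ spots cannot rescue a no-instance. Consider the moment the last green passenger of the $j$-th queue segment boards. I claim (i) no green bus sits in a spot then: a green bus dispatched inside a green segment boards immediately, and any green bus stockpiled during a red segment is flushed out (there are at most $s-1<T''$ of them) at the start of the next green segment; and (ii) the buses then occupying spots are exactly the red buses dispatched during the $j$-th green segment, of which there can be at most $s-1$, because having all $s$ spots red-occupied during a green segment is an immediate deadlock. Writing $A_j$ for the set of paths from which a green bus has been dispatched by that moment, (i) gives $\sum_{i\in A_j}a_i'' = jT''+u_j$ with $0\le u_j\le s-1$, so the newly activated paths $A_j\setminus A_{j-1}$ have element-sum $T''+\delta_j$ where $\delta_j=u_j-u_{j-1}\in[-(s-1),s-1]$. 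But every $a_i''$ and $T''$ are multiples of $s$, so $\delta_j$ is a multiple of $s$; with $|\delta_j|\le s-1$ this forces $\delta_j=0$. Hence $A_j\setminus A_{j-1}$ has element-sum exactly $T''$, and since each $a_i''$ lies strictly between $T''/4$ and $T''/2$ the only way this happens is with $\lvert A_j\setminus A_{j-1}\rvert=3$. Ranging over $j=1,\dots,n$ recovers a \tbf{3-Partition} of $M''$, hence of $M$.

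The step I expect to be the main obstacle is making claims (i) and (ii) fully rigorous against an adversarial choice of the play order: one must track, phase by phase, which buses can be waiting in spots at the segment boundaries and argue that any departure from the claimed picture either is impossible (it forces a deadlock) or is already accounted for by the bound $u_j\le s-1$. Once that combinatorial bookkeeping is pinned down, the role of the $s(a_i+1)$ rescaling is only to make the unavoidable $\pm(s-1)$ slack too coarse to survive modulo $s$; everything else --- the polynomial-time bounds, the Lemma~\ref{lem:duplication} step, and the ``only if'' direction --- is routine.
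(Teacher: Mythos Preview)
Your proposal is correct and takes essentially the same route as the paper: reduce to $\mcal{B}(s,2,\{1\})$ via Lemma~\ref{lem:duplication}, then rescale the \tbf{3-Partition} data by a factor of~$s$ so that the at-most-$(s{-}1)$ slack created by the extra spots vanishes modulo~$s$. The only cosmetic differences are that you preprocess $M$ whereas the paper bakes the factor $s$ directly into the bus paths and queue, your additive $+1$ is unnecessary (the constraint $T/4<a_i<T/2$ is already preserved under multiplication by~$s$), and your $A_j,u_j$ bookkeeping across all $n$ segments is in fact more explicit than the paper, which argues only for the first segment and leaves the iteration implicit; with unit capacities your claims (i) and (ii) are easy to verify, and indeed each $u_j$ is itself a multiple of $s$ in $[0,s-1]$, so $u_j=0$ outright.
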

\begin{proof}
    We show the NP-hardness of $\mbf{BusOut}(\mcal{B}(s,2,\{1\}))$.
    This implies that $\mbf{BusOut}(\mcal{B}(s,2,\min V))$ is NP-hard by Lemma~\ref{lem:duplication}, and then so is $\mbf{BusOut}(\mcal{B}(s,c,V))$ with $c \ge 2$.
    We construct $(G_{M,s},Q_{M,s},\varepsilon^s)$ based on $(G_M,Q_M,(\varepsilon))$ in the proof of Theorem~\ref{theorem:121_comp} by duplicating the passengers and buses $s$ times as illustrated in Figure~\ref{fig:s21_game_init}.
\begin{figure}[t]
    \begin{center}
        \newcommand{\human}[2]{
    \fill [#1] (#2, 0) circle (0.15);
    \fill [#1] (#2, -0.45) circle [x radius = 0.15, y radius = 0.3];
}

\newcommand{\multihumen}[3]{
    % \draw [help lines] (0, 0) grid (2, -2);
    \pgfmathsetmacro{\first}{#2 + 0.4}
    \pgfmathsetmacro{\second}{#2 + 1}
    \pgfmathsetmacro{\third}{#2 + 1.6}
    \draw decorate[decoration = {brace, amplitude = 5pt}] {
        (#2 - 0.15, 0.3) -- (#2 + 1.75, 0.3)
    };
    \node (num) at (#2 + 0.8, 0.5) [above] {#3};
    \human{#1}{#2} 
    \human{#1}{\first}
    \node (dots) at (\second, -0.2) {\color{#1}$\cdots$};
    \human{#1}{\third}
}

\newcommand{\humenblock}[2]{
    % \draw [help lines] (-3, -3) grid (3, 3);
    \draw [dashed, rounded corners] (#1 - 2.15, 1.1) -- (#1 + 2.15, 1.1) -- (#1 + 2.15, -0.9) -- (#1 - 2.15, -0.9) -- cycle;
    \multihumen{red}{#1 - 1.9}{#2};
    \multihumen{dgreen}{#1 + 0.3}{#2};
}
\newcommand{\humenqueue}[2]{
    \draw decorate[decoration = {brace, amplitude = 10pt}] {
        (-2.15, 1.2) -- (12.15, 1.2)
    };
    \node (num) at (5, 1.6) [above] {#2};
    \humenblock{0}{#1};
    \humenblock{4.5}{#1};
    \node (dots) at (7.25, -0.2) {$\cdots$};
    \humenblock{10}{#1};
    \draw[<-,thick] (-2.3,-1.1) -- (12.2,-1.1);
}

\newcommand{\smallbus}[4]{
    \draw [fill = #1, rotate around={#4:(#2, #3)}] (#2, #3 + 0.7) [rounded corners] -- (#2 - 0.5, #3 + 0.5) [rounded corners] -- (#2 - 0.5, #3 - 0.5) [rounded corners] -- (#2 + 0.5, #3 - 0.5) [rounded corners] -- (#2 + 0.5, #3 + 0.5) -- (#2, #3 + 0.7);
    \draw [white, thick, fill = white, rotate around={#4:(#2, #3)}] (#2, #3) circle [radius = 0.15];
}

\newcommand{\busblock}[4]{
    % \draw [help lines] (-3, -3) grid (3, 3);
    \smallbus{#1}{#3}{#2}{0};
    \smallbus{#1}{#3}{#2 - 1.5}{0};
    \node (dots) at (#3, #2 - 1.5 - #4) {\color{#1}$\vdots$};
    \smallbus{#1}{#3}{#2 - #4 * 2 - 1.7}{0};
}
\newcommand{\busqueue}[3]{
    \draw decorate[decoration = {brace, amplitude = 10pt, mirror}] {
        (#1 -0.7, 0.7) -- (#1 -0.7, -#2 - #2 - 2.2)
    };
    \draw decorate[decoration = {brace, amplitude = 10pt, mirror}] {
        (#1 -0.7, -#2 * 2 - 2.5) -- (#1 -0.7, -#2 * 4 - 5.4)
    };
    \node (num) at (#1 -1, -#2 - 0.75) [left] {#3};
    \node (num) at (#1 -1, -#2 * 3 - 3.95) [left] {#3};
    \busblock{red}{0}{#1}{#2};
    \busblock{dgreen}{- #2 - #2 - 3.2}{#1}{#2};
}

% \hfill
\begin{tikzpicture}[scale=0.75]
    \humenqueue{$sT$}{$2nsT$}
\end{tikzpicture}
\vspace{0.5em}

\begin{tikzpicture}[baseline = -18em, scale=0.75]
    \draw (0, 0) -- (0, -6) -- (4, -6) -- (4, 0) -- (0, 0);
    \node [below] (title) at (2, 0) {\small Parking Spots};
    \foreach \x in {0.7, 2, 4.6}
        \draw [dashed, rounded corners] (0.2, -\x) -- (3.8, -\x) -- (3.8, -\x - 1.2) -- (0.2, -\x - 1.2) -- cycle;
    \node (dots) at (2, -3.9) {$\vdots$};
    \draw decorate[decoration = {brace, amplitude = 10pt, mirror}] {
        (-0.1, -0.7) -- (-0.1, -5.8)
    };
    \node (num) at (-0.8, -3.25) {$s$};
\end{tikzpicture}  
\hspace{1em}
\begin{tikzpicture}[baseline = -15em, scale=0.75]
    % \busblock{red}{0}{-2}{1};
    % \busblock{dgreen}{0}{-3.5}{2};
    \draw decorate[decoration = {brace, amplitude = 10pt}] {
        (-0.5, 1.2) -- (8.5, 1.2)
    };
    \node (num) at (4, 1.9) {$3n$};
    \busqueue{0}{1.5}{$sa_1$};
    \busqueue{2.75}{1.1}{$sa_2$};
    % \node (dots1) at (3.75, -5) {$\cdots$};
    % \busqueue{5.5}{1.75}{$sa_i$};
    % \node (dots2) at (6.75, -5) {$\cdots$};
    \node (dots) at (5, -4.5) {\Large$\cdots$};
    \busqueue{8}{1.25}{$sa_{3n}$};
\end{tikzpicture}
        \caption{Reduction from $\mathbf{3\textrm{-}Partition}$ to $\mathbf{BusOut}(\mcal{B}(s, 2, \{1\}))$}
        \label{fig:s21_game_init}
    \end{center}
\end{figure}
    That is, $G_{M,s}$ consists of directed paths with $s a_i$ red buses followed by $s a_i$ green buses for each $i \in \{1,\dots,3n\}$, and
    $Q_{M,s}$ is $(\qr{s T}\qg{s T})^n$.
    If $M$ has a solution, the same strategy as in the proof of Theorem~\ref{theorem:121_comp} gives a solution for $(G_{M,s},Q_{M,s},\varepsilon^s)$.
    Note that just one parking spot is enough for this solution.

    Conversely, suppose $(G_{M,s},Q_{M,s}, \varepsilon^s)$ is solvable.
    We show that the extra $s-1$ spots does not make it easier to find a solution for this configuration.
    Consider the moment when the last passenger in the first segment is boarding.
    To load the $sT$ green passengers in the first segment, suppose we use some green buses from $k$ paths in $G_{M,s}$, whose indices are $i_1, i_2, \dots, i_k$.
   We must have $s(a_{i_1} + a_{i_2} + \dots + a_{i_k}) \geq sT$, i.e.,
   \[
    a_{i_1} + a_{i_2} + \dots + a_{i_k} - T  \geq 0
   \,.\]
%    Suppose $a_{i_1} + a_{i_2} + \dots + a_{i_k} > T$ for the sake of contradiction.
%    To dispatch a green bus in the $i$-th bus path, all the $s a_i$ red buses in front must have already been dispatched.
%    Since there are no more than $sT$ red passengers, $s(a_{i_1} + a_{i_2} + \dots + a_{i_k}) - sT \ge s$ red buses cannot leave the station, which means that every parking spot should be occupied by a red bus.
   To dispatch a green bus in the $i$-th bus path, all the $s a_i$ red buses in front must be dispatched beforehand.
   In addition, to make a space for the green bus for the last green passenger in the segment, all \emph{but at most $s-1$} of those red buses should have left with red passengers.
   Since there are no more than $sT$ red passengers, at most $sT$ red buses can leave the station.
   Hence,
   $s(a_{i_1} + a_{i_2} + \dots + a_{i_k}) - (s-1) \leq sT$, i.e.,
   \[
    a_{i_1} + a_{i_2} + \dots + a_{i_k} - T \le 1 - \frac{1}{s} < 1\,.
   \]
   Since $a_{i_1}, a_{i_2}, \dots, a_{i_k}$ and $T$ are integers, 
   \[
    a_{i_1} + a_{i_2} + \dots + a_{i_k} - T \leq 0
   \]
   holds.
    Therefore, we have $k=3$ and $a_{i_1}+a_{i_2}+a_{i_3} = T$ as desired.
\end{proof}

We remark that the congestion graphs used in the proofs of Theorems~\ref{theorem:121_comp} and~\ref{theorem:s2V_comp} consist of disjoint directed paths.
This shows that to make $\mbf{BusOut}$ hard, no complicated graphs are required even with two colors and a singleton capacity set.

Recall that one can buy parking spots in the real smartphone game for solving instances easily.
However, it is hard to approximate the number of extra spots needed to make an instance solvable.
\begin{corollary}\label{cor:approx}
    Unless $\mrm{P}=\mrm{NP}$,
    no integer $r \ge 1$ admits a polynomial-time algorithm that computes an integer $\tilde{s}$ for a given $(G,Q)$ such that $s_0 \le \tilde{s} \le rs_0$ for the least integer $s_0$ for which $(G,Q,\varepsilon^{s_0})$ is solvable.
\end{corollary}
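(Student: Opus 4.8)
The plan is to reuse the scaled reduction from the proof of Theorem~\ref{theorem:s2V_comp}, but with the scaling factor chosen equal to the given ratio $r$, so that the gap between ``solvable with one spot'' and ``not solvable with $r$ spots'' is wide enough to defeat an $r$-approximation. Fix $r \ge 1$. Given a \tbf{3-Partition} instance $M = \{a_1,\dots,a_{3n}\}$ with $T = \frac1n\sum_{i}a_i$ and $T/4 < a_i < T/2$, let $(G_{M,r},Q_{M,r})$ be the congestion graph and passenger queue of Figure~\ref{fig:s21_game_init} with the parameter $s$ there set to $r$: for each $i$, $G_{M,r}$ contains a directed path of $r a_i$ red buses followed by $r a_i$ green buses, all of capacity $1$, and $Q_{M,r} = (\psg{R}^{rT}\psg{G}^{rT})^n$. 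Since \tbf{3-Partition} is strongly NP-hard (its input size is $\sum_i a_i$) and $r$ is a constant, $(G_{M,r},Q_{M,r})$ is computable from $M$ in polynomial time, and it is eligible.

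Let $s_0$ denote the least number of spots for which $(G_{M,r},Q_{M,r},\varepsilon^{s_0})$ is solvable; $s_0$ is well defined because $G_{M,r}$ is acyclic, so with one spot per bus one may dispatch every bus (in topological order) while the queue drains, which never deadlocks as the total capacity of each color equals the number of passengers of that color. The key step is to establish the dichotomy: if $M$ is a yes-instance then $s_0 = 1$, and if $M$ is a no-instance then $s_0 \ge r+1$. The first case is immediate, since the strategy used in the proofs of Theorems~\ref{theorem:121_comp} and~\ref{theorem:s2V_comp} clears $(G_{M,r},Q_{M,r},\varepsilon)$ with a single spot. For the second case, suppose $(G_{M,r},Q_{M,r},\varepsilon^{s'})$ were solvable for some $s' \le r$. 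Running the ``only if'' argument from the proof of Theorem~\ref{theorem:s2V_comp} with $s'$ spots in place of exactly $r$ spots, the space required by the green bus that admits the last green passenger of the first segment forces $r(a_{i_1}+\dots+a_{i_k}) - (s'-1) \le rT$, hence $a_{i_1}+\dots+a_{i_k} - T \le (s'-1)/r \le (r-1)/r < 1$; combined with $a_{i_1}+\dots+a_{i_k} \ge T$ and integrality this gives $k = 3$ and $a_{i_1}+a_{i_2}+a_{i_3} = T$, and after the first segment has left, the configuration is the one reduced from $M$ by deleting those three elements. Iterating yields a $3$-partition of $M$, a contradiction; hence no $s' \le r$ works and $s_0 \ge r+1$.

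With the dichotomy established, assume for contradiction that for some $r \ge 1$ there is a polynomial-time algorithm which, on any input $(G,Q)$, outputs an integer $\tilde s$ with $s_0 \le \tilde s \le r s_0$. Apply it to $(G_{M,r},Q_{M,r})$. If $M$ is a yes-instance then $\tilde s \le r \cdot 1 = r$; if $M$ is a no-instance then $\tilde s \ge s_0 \ge r+1$. Thus the test ``$\tilde s \le r$'' decides \tbf{3-Partition} in polynomial time, so $\mrm{P} = \mrm{NP}$.

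The one point that needs care is the no-instance half of the dichotomy: one must verify that the ``only if'' part of the proof of Theorem~\ref{theorem:s2V_comp} --- the extra-spot slack inequality and, above all, the claim that clearing one segment returns a configuration of the reduced form --- remains valid when ``exactly $r$ spots'' is weakened to ``at most $r$ spots'' and when empty buses may have been parked speculatively in the remaining spots. This is precisely the accounting already carried out there, now reading $s-1$ as $s'-1 \le r-1$, so it goes through unchanged; it is merely the single step that must be re-examined rather than quoted verbatim. Note also that the scaling by $r$ is essential: the unscaled reduction $G_M$ of Theorem~\ref{theorem:121_comp} only separates $s_0 = 1$ from $s_0 \ge 2$, which rules out computing $s_0$ exactly but not approximating it within a factor $r \ge 2$.
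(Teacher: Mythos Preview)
Your proof is correct and follows the same approach as the paper's (much terser) argument: plug the scaling parameter $s=r$ into the reduction of Theorem~\ref{theorem:s2V_comp}, read off that yes-instances give $s_0=1$ while no-instances force $s_0\ge r+1$, and conclude that an $r$-approximation would decide \tbf{3-Partition}. Your extra caution about re-running the inequality with $s'\le r$ spots is harmless but unnecessary: solvability is monotone in the number of spots (unused spots can be ignored), so unsolvability of $(G_{M,r},Q_{M,r},\varepsilon^{r})$---exactly what Theorem~\ref{theorem:s2V_comp} gives---already forces $s_0>r$.
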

\begin{proof}
    Recall our reduction used in the proof of Theorem~\ref{theorem:s2V_comp}.
    Consider $s$ for $(G_{M,r},Q_{M,r},\varepsilon^{s})$ to be solvable.
    If $M$ has a solution, then $s=1$ is enough.
    Otherwise, $s=r$ is not enough.
\end{proof}
Now we turn our attention to an even restricted class of congestion graphs: namely, independent sets.
Let $\mcal{B}_\tsc{ind}(s,c,V) \subseteq \mcal{B}(s,c,V)$ be the collection of configurations where the congestion graphs have no edges.
% The problem $\mbf{Ind}$-$\mbf{BusOut}$ is the restriction of $\mbf{BusOut}$, where congestion graphs have no edges.
\begin{theorem}
    Fix positive integers $s$, $c$, and $v$, and let $\mcal{A}(s,c,v) = \bigcup_{|V| \le v} \mcal{B}_\tsc{ind}(s,c,V)$.
    The problem $\mbf{BusOut}(\mcal{A}(s,c,v))$ is decidable in polynomial time.
\end{theorem}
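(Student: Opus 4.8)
The plan is to observe that, for fixed $s,c,v$, the whole reachable state space of an instance of $\mbf{BusOut}(\mcal{A}(s,c,v))$ is polynomial in the input size, so that solvability — which is just reachability of the empty configuration — can be decided by a breadth-first search. The feature that makes this work (in contrast to Theorems~\ref{theorem:121_comp} and~\ref{theorem:s2V_comp}, where even disjoint directed paths suffice for hardness) is that congestion graphs in $\mcal{A}(s,c,v)$ have no edges, so every bus is free throughout the game: a first-type transition may move \emph{any} not-yet-dispatched bus into \emph{any} empty spot, buses sharing a (color, capacity) type are interchangeable, and empty spots are interchangeable.

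First I would argue that a configuration reachable from an initial configuration $(G,Q,\varepsilon^s)$ is completely determined by the pair $(B,S)$, where $B$ is the multiset of buses not yet dispatched and $S$ is the spot occupancy. The congestion graph at any moment is $G$ restricted to $B$, which carries no further information because it has no edges; and the position in $Q$ is forced by $(B,S)$, since every boarded passenger occupies exactly one bus seat, all such occupied seats lie in parked or already-departed buses (undispatched buses being untouched), and the total capacity of all buses equals $|Q|$ by eligibility, so the number of passengers that have boarded is $|Q| - \sum_{(x,k)\in S} k - (\text{total capacity of }B)$. Hence the remaining suffix of $Q$ is recoverable from $(B,S)$.

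Next I would bound the number of such states. There are at most $cv$ bus types, so $B$ ranges over sub-multisets of the initial bus multiset, of which there are at most $\prod_t (n_t+1) \le (N+1)^{cv}$, where $N$ is the number of buses and $n_t$ is the input multiplicity of type $t$; since $cv$ is a constant this is polynomial in $N$. Each of the $s$ spots is either empty or holds a pair $(x,k)$ with $x$ one of the at most $c$ colors and $1 \le k \le \max V$, and eligibility forces $\max V \le |Q|$ (a bus of capacity $C$ and color $x$ needs $C$ passengers of color $x$), so the number of spot occupancies is at most $(1+c\cdot\max V)^s$, polynomial in $|Q|$ because $s$ is a constant. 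Thus the reachable states form a set of size polynomial in the input. I would then form the directed graph on these states whose edges are the single transitions of the two types: from a state where a second-type transition applies, the (at most $s$) outgoing edges send the front passenger into a matching parked bus; from a state where none applies, the (at most $cv$) outgoing edges dispatch one bus of each available type (the target empty spot being immaterial). The out-degree is thus bounded by $\max(s,cv)$, a constant, and every edge is computable directly from the definitions. By the definition of solvability, $(G,Q,\varepsilon^s)$ is solvable if and only if the empty configuration is reachable from $(B_0,\varepsilon^s)$ in this graph, $B_0$ being the initial bus multiset, which a breadth-first search decides in time polynomial in the number of states, hence in the input size.

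The main point requiring care is purely quantitative: keeping the fixed parameters $s,c,v$ in the exponents and the input quantities $N,|Q|$ in the bases of the size estimates, and — related to this — confirming that the apparent nondeterminism of the automatic boarding step (which same-colored parked bus a passenger enters, cf.\ the footnote accompanying the transition relation) creates no problem, since the search merely enumerates all those choices and the state set remains polynomial regardless. Everything else is routine bookkeeping over the transition relation.
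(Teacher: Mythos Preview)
Your proposal is correct and follows essentially the same approach as the paper: identify a reachable configuration with the pair (multiset of undispatched buses, spot occupancy), observe that the passenger queue is then forced, and bound the number of such pairs polynomially using that $s,c,v$ are fixed constants. You add a little more detail than the paper (the explicit BFS, the out-degree bound, and the remark on the nondeterminism of the automatic boarding step), but the argument is the same.
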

\begin{proof}
    It is enough to show that any instance $(G,Q,S) \in \mbf{BusOut}(\mcal{A}(s,c,v))$ has at most polynomially many reachable configurations.
    Let $n$ be the number of passengers, which coincides with the sum of the capacities of the buses.

    Since $G$ and succeeding congestion graphs are independent sets, they can be identified with a multiset of pairs of a color and a capacity.
    Thus, at most $n^{cv}$ different congestion graphs (modulo isomorphism) appear in reachable configurations.

    Let $d_x$ be the maximum capacity of a bus of color $x$ in the instance.
    The state of each parking spot is a pair of a color and available seats unless the spot has no bus, so there are at most $1+\sum_{x} d_x \le 1+n$ variants.
    Thus, the total number of possible spot occupancy states in the station is bounded by $(1+n)^s$.

    For each pair of a congestion graph and a spot occupancy state, at most one passenger queue is possible to form a reachable configuration.
    Therefore, we have at most $O(n^{scv})$ reachable configurations.
\end{proof}
When $s \ge c$, every instance is solvable even if bus capacities are not finitely fixed.
\begin{theorem}\label{thm:ind_triv}
    Every configuration of $\mcal{B}_\tsc{ind}(s, c, \mbb{N}_+)$ is solvable if $c \le s$.
\end{theorem}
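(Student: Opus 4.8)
The plan is to exhibit an explicit ``lazy'' strategy for the player and show it never runs into a deadlock. Every transition strictly decreases either the number of vertices of the congestion graph or the length of the passenger queue, so any play that avoids deadlocks must terminate, and it can only terminate at the empty configuration; thus it suffices to prove that from every nonempty reachable configuration the strategy has a legal continuation.

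The strategy is: let all automatic (second-type) transitions fire first, and when none is applicable while the queue is still nonempty with head color $x$, dispatch one bus of color $x$. Since the congestion graph is edgeless, every bus still outside the station is free, so a dispatch is never obstructed by the graph. Two things remain to be checked at the moment the player acts: (i) an empty spot is available to receive the bus, and (ii) a bus of color $x$ actually remains outside the station.

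For (i) I would maintain the invariant that at most one bus of each color is parked at any time. It holds at the start (the station is $\varepsilon^s$), it is preserved by automatic transitions (which only shrink or remove a parked bus), and it is preserved by the strategy, which dispatches a color-$x$ bus only when no color-$x$ bus is parked. Hence at most $c$ spots are ever used; and at the moment the player must dispatch a color-$x$ bus there is, by definition of the strategy, no color-$x$ bus parked, so at most $c-1 \le s-1$ spots are occupied and a free spot exists. The same invariant shows that whenever all $c$ colors are simultaneously parked an automatic transition is applicable, so the player is never asked to act with as many as $c$ buses in the station.

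For (ii), which is the crux, I would use eligibility. At a moment where the player must dispatch a color-$x$ bus, count the color-$x$ passengers: each that has left the queue has boarded some color-$x$ bus and never leaves it; a departed color-$x$ bus carries exactly its capacity, a parked color-$x$ bus carries its capacity minus its remaining seats, and an in-graph color-$x$ bus carries none. Since eligibility equates the total capacity of the color-$x$ buses with the initial number of color-$x$ passengers, the number of color-$x$ passengers still queued equals the sum of the remaining seats of parked color-$x$ buses plus the total capacity of the in-graph color-$x$ buses; as no color-$x$ bus is parked, this is exactly the total capacity of the color-$x$ buses still in the graph, and the head passenger witnesses that it is at least $1$ --- so such a bus indeed exists. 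This accounting is the one delicate step; everything else is bookkeeping. With (i) and (ii) established, every nonempty reachable configuration admits a legal move: an automatic transition if one is pending, otherwise the prescribed dispatch if the queue is nonempty, and if the queue is empty then eligibility forces every bus to have departed, so $G$ is empty and $S = \varepsilon^s$, contradicting nonemptiness. Hence no reachable configuration is a deadlock, the play terminates, and it terminates at the empty configuration, proving solvability.
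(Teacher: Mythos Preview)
Your proof is correct and follows essentially the same approach as the paper, which simply says ``Reserve a parking spot for each color. Then, we never reach a deadlock.'' Your invariant that at most one bus of each color is ever parked is precisely this reservation idea, and you have carefully filled in the details the paper leaves implicit, in particular the eligibility-based counting argument in (ii) showing a bus of the required color is always still available.
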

\begin{proof}
    Reserve a parking spot for each color.
    Then, we never reach a deadlock.
    % When $s \geq c$, it is always possible to dispatch a bus whose color matches the color of the person at the front of the queue. Since the number of boarding locations is not a limiting factor, and it is assumed that for each color, the total number of people matches the total bus capacity of that color, we can always dispatch all people and buses without deadlock. This dispatching strategy can be implemented in polynomial time.
\end{proof}
Theorem~\ref{thm:ind_comp} contrasts with Theorem~\ref{thm:ind_triv}.
\begin{theorem}\label{thm:ind_comp}
    The problem $\mathbf{BusOut}(\mcal{B}_\tsc{ind}(s, c, \mbb{N}_+))$ is NP-complete if $s < c$.
\end{theorem}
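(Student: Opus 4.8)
The plan is to show NP-hardness by a polynomial reduction from \textbf{3-Partition}; membership in NP was already noted for $\mathbf{BusOut}$ in general, and $\mathcal{B}_{\textsc{ind}}(s,c,\mathbb{N}_+)\subseteq\mathcal{B}(s,c,\mathbb{N}_+)$, so the restricted problem is in NP too. Since $\mathcal{B}_{\textsc{ind}}(s,s{+}1,\mathbb{N}_+)\subseteq\mathcal{B}_{\textsc{ind}}(s,c,\mathbb{N}_+)$ for every $c\ge s{+}1$, it suffices to prove that $\mathbf{BusOut}(\mathcal{B}_{\textsc{ind}}(s,s{+}1,\mathbb{N}_+))$ is NP-hard for each fixed $s\ge 1$.

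I want to reuse the mechanism behind Theorem~\ref{theorem:121_comp}, where a single free parking spot forces each monochromatic run of the queue to be served by buses whose capacities add up exactly to the run length, thereby encoding \textbf{3-Partition}. The obstacle is that here the congestion graph must be edgeless, so the long bus-paths of that proof are unavailable; the remedy is to spend $s-1$ of the $s{+}1$ available colors on \emph{blocker} buses that permanently occupy $s-1$ spots. Given $M=\{a_1,\dots,a_{3n}\}$ with $\sum_i a_i=nT$ and $T/4<a_i<T/2$, I build the configuration with $s$ spots all initially empty, an edgeless congestion graph, and colors $1,2,\dots,s{+}1$ as follows:
\begin{itemize}
\item each blocker color $j\in\{2,\dots,s\}$ has a single bus of capacity $2$, and the queue contains exactly two $j$-passengers, one as the very first symbol of the queue and one as the very last (when $s=1$ there are no blockers);
\item color $1$ has buses of capacities $a_1,\dots,a_{3n}$, color $s{+}1$ has $n-1$ buses of capacity $1$, and the middle part of the queue is $1^{T}\,(s{+}1)\,1^{T}\,(s{+}1)\cdots(s{+}1)\,1^{T}$, that is, $n$ runs of $T$ color-$1$ passengers separated by the $n-1$ color-$(s{+}1)$ passengers.
\end{itemize}
Eligibility holds color by color, and the reduction is polynomial because \textbf{3-Partition} is strongly NP-hard.

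For correctness, if $M$ is a yes-instance one solves the configuration by: dispatching the $s-1$ blocker buses to serve the leading block $2\,3\cdots s$; then, for the $i$-th run, dispatching exactly the three color-$1$ buses of the $i$-th triple (they fill, and the last departs precisely when the run ends) followed by one capacity-$1$ bus of color $s{+}1$; and finally serving the trailing block $2\,3\cdots s$, which tops off and releases every blocker, emptying the configuration. For the converse, the key invariant is that in any solving sequence the leading block forces all $s-1$ blocker buses into $s-1$ distinct spots before any color-$1$ passenger boards, and these buses cannot depart until their second passengers appear at the end of the queue; hence exactly one spot is usable throughout the middle part. With one usable spot, a run $1^{T}$ must be served by color-$1$ buses whose capacities sum to exactly $T$: a bus leaves precisely when full, and a leftover partially filled color-$1$ bus can never be completed (no further color-$1$ passengers remain within the run) and so would block the next color-$(s{+}1)$ passenger, or else prevent reaching the empty configuration at the end. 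Summing over the $n$ runs partitions $\{a_1,\dots,a_{3n}\}$ into $n$ subsets of sum $T$, and $T/4<a_i<T/2$ forces each subset to have size $3$, so the configuration is solvable iff $M$ is a yes-instance.

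The main obstacle is making the soundness direction airtight: one must rule out that any ``clever'' play helps — delaying a blocker, parking a useless bus in the free spot, or dispatching an oversized color-$1$ bus hoping to finish it after a color-$(s{+}1)$ separator — so that genuinely only one spot is available during the middle part and each run is packed exactly. This is the same serialization phenomenon exploited in Theorems~\ref{theorem:121_comp} and~\ref{theorem:s2V_comp}, transported to the edgeless setting; once the invariant that the $s-1$ blocker buses hold $s-1$ spots throughout the middle part is established, the remaining case analysis is routine.
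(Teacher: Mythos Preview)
Your reduction is correct and follows the paper's approach: both reduce from \textbf{3-Partition}, use $s-1$ ``blocker'' colors with capacity-$2$ buses to pin down $s-1$ spots, encode the partition instance in one color via buses of capacities $a_1,\dots,a_{3n}$, and use a further color for capacity-$1$ separators between the length-$T$ runs. The soundness arguments are the same single-free-spot packing argument.

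The one structural difference is in how the blockers are managed. The paper gives each blocker color $n$ buses and writes the queue as $(x_1\cdots x_{s-1}\,x_0^T\,x_s\,x_1\cdots x_{s-1})^n$, so the $s-1$ blocker spots are emptied and refilled once per segment; this lets the soundness proof reset cleanly to an identical subinstance after each segment. You instead give each blocker color a single bus and place its two passengers only at the extreme ends of the queue, so the blockers are parked once and stay for the entire middle part; correspondingly you need only $n-1$ separators rather than $n$. Your variant is slightly leaner and your invariant (``exactly one spot is usable throughout the middle part'') is arguably cleaner to state, at the cost of having to argue the final run separately (it has no separator after it, so overflow there is ruled out by eligibility rather than by an immediate deadlock). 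Both versions yield the same conclusion with the same effort.
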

\begin{proof}
    For an instance $M = \{a_1,\dots,a_{3n}\}$ of \tbf{3-Partition}, we define a configuration of $\mcal{B}_\tsc{ind}(s,s+1,\mbb{N}_+)$ as follows:
    \begin{itemize}
        \item The colors are $x_0,\dots,x_s$.
        \item The congestion graph is identified with a multiset
        \\ $\{\,(x_0,a_i)^1 \mid 1 \le i \le n \,\}\cup \{\,(x_i,2)^n \mid 1\le i\le s-1\,\} \cup \{(x_s,1)^1\}$,\\ i.e., it consists of
        \begin{itemize}
            \item one bus of color $x_0$ with capacity $a_i$ for each $i = 1, \dots, 3n$,
            \item $n$ buses of color $x_i$ with capacity $2$ for each $i = 1,\dots,s-1$,
            \item $n$ buses of color $x_s$ with capacity $1$.
        \end{itemize}
        \item The passenger queue is $Q = (x_1 x_2 \dots x_{s-1} x_0^T x_s x_1 x_2 \dots x_{s-1})^n $, where $T=\frac{1}{n}\sum_{i=1}^{3n} a_i$.
        % 
        % consists of $n$ identical segments, each structured as:
        % \begin{itemize}
        %     \item one passenger of each color $x_1, x_2, \dots,x_{s-1}$ followed by
        %     \item $T$ passengers of color $x_0$ followed by
        %     \item one passenger of each color $x_s, x_1, x_2 \dots, x_{s-1}$.
        % \end{itemize}
        \item The initial parking spots are empty.
    \end{itemize}
    This instance is clearly constructible in polynomial time.

    Suppose that $M$ has a solution $\{\{a_{p_1}, a_{q_1}, a_{r_1}\}, \ldots, \{a_{p_n}, a_{q_n}, a_{r_n}\}\}$.
    Then, the following procedure clears each passenger segment $i = 1, \ldots, n$:
    \begin{enumerate}
        \item Dispatch one bus of each color $x_1, \ldots, x_{s-1}$ to accommodate one passenger of the respective colors.
        Those buses stay in the station occupying $s-1$ parking spots in total.
        \item Dispatch the three $x_0$-colored buses with capacities $a_{p_i}$, $a_{q_i}$, and $a_{r_i}$ to carry the $T$ passengers of color $x_0$.
        This uses and frees up the last available spot thanks to $a_{p_i} + a_{q_i} + a_{r_i} = T$.
        \item Dispatch one $x_s$-colored bus to the freed spot.
        \item All the remaining passengers in the segment board on the buses in the station.
    \end{enumerate}

    Conversely, suppose the instance has a solution.
    Then, at the beginning, to load the first $s-1$ passengers, we must dispatch buses of colors $x_1,\dots,x_{s-1}$.
    Since those buses have capacity 2, they stay and occupy $s-1$ spots, so the following $T$ passengers of color $x_0$ board using just one spot.
    Suppose we use buses of capacities $a_{i_1},\dots,a_{i_k}$ to carry them.
    % We have $a_{i_1}+ \dots + a_{i_k} \ge T$.
    Here, to carry the next passenger of color $x_s$, the $x_0$-colored bus should leave there to free up the spot for a bus of color $x_s$.
    This requires $a_{i_1} + \dots + a_{i_k}=T$, with $k=3$.
     After the $T$ passengers of color $x_0$ have left, the only way to carry the $x_s$-colored passenger is to dispatch a $x_s$-colored bus.
     Then, it immediately leaves the station.
    The following passengers of color $x_1,\dots,s_{s-1}$ automatically board on their matching buses in the station, which clears all the spots.
    The same argument applies to the following segments of the queue, and we obtain a solution for $M$.
\end{proof}

\section{Conclusion and Future Work}

% In this study, we defined the decision problem $\mathbf{BusOut}$ for the mobile game ``\textit{Bus Out}'' and analyzed the relationship between the characteristics of game states and their computational complexity. As a result, we showed that when buses have only one color or when there is no interference between buses and the number of boarding spots exceeds the number of colors, the problem belongs to P. However, when multiple colors are involved and interference exists, the problem is NP-complete.

% For future work, our analysis focused on $n$ bus queues of length $O(T^2)$. There remains room for further investigation into problems with additional constraints on the number of buses and their interference states, as well as the computational complexity of extended versions of $\mathbf{BusOut}$, such as competitive multiplayer modes.

In this study, we defined the decision problem $\mathbf{BusOut}$ for the mobile game \textit{Bus Out} and showed that the problem is NP-complete, even under strong restrictions, whereas slight relaxations make the problem trivial.
Namely, the problem remain NP-hard when the station has only one parking spot, buses have only two colors, a single fixed capacity, and the congestion graph consists of disjoint paths.
It is still NP-hard even when traffic congestion is absent if we have two colors and possible bus capacities are not fixed.
On the other hand, if we have no less parking spots than colors, the problem with no traffic congestion is trivial.
We also showed that it is hard to approximate the least number of spots for a congestion graph and a passenger queue to make the composed configuration solvable.

Still, there remains scope for further exploration of \tbf{BusOut}.
For example, we did not discuss the complexity of the problem when the capacity set is fixed and the number of colors is unbounded.
Furthermore, extending game into a two-player competitive version would also be an interesting direction for future research.
% \input{docs/acknowledgement.tex}

%%%%%%%%%%%%%%%%%%%% 参考文献 %%%%%%%%%%%%%%%%%%%%
%\nocite{*}
\bibliographystyle{unsrt}
\bibliography{ref}

\end{document}